\tikzset{>=latex}
\newcommand\tw{\operatorname{tw}}
\newcommand{\poly}{\operatorname{poly}}
\newcommand{\Gpoly}{\mathcal{G}_{\poly}}
\newcommand{\cO}{\mathcal{O}}
\newcommand{\pmc}{potential maximal clique}
\title{On \textsc{Distance-$d$ Independent Set} and other problems in graphs with ``few'' minimal separators}
\author{Pedro Montealegre  \and Ioan Todinca}
\institute{
Univ. Orl\'{e}ans, INSA Centre Val de Loire, LIFO EA 4022,  Orl{\'e}ans , France
\\ \texttt{(pedro.montealegre $\mid$ ioan.todinca)@univ-orleans.fr}
}
\begin{document}
\maketitle
\begin{abstract}
Fomin and Villanger (\cite{FoVi10}, STACS 2010) proved that \textsc{Maximum Independent Set}, \textsc{Feedback Vertex Set}, and more generally the problem of finding a maximum induced subgraph of treewith at most a constant $t$, can be solved in polynomial time on graph classes with polynomially many minimal separators. We extend these results in two directions. Let $\Gpoly$ be the class of graphs with at most $\poly(n)$ minimal separators, for some polynomial $\poly$.

We show that the odd powers of a graph $G$ have at most as many minimal separators as $G$. Consequently, \textsc{Distance-$d$ Independent Set}, which consists in finding maximum set of vertices at pairwise distance at least $d$, is polynomial on $\Gpoly$, for any even $d$. The problem is NP-hard on chordal graphs for any odd $d \geq 3$~\cite{EGM14}.

We also provide polynomial algorithms for \textsc{Connected Vertex Cover} and \textsc{Connected Feedback Vertex Set} on subclasses of $\Gpoly$ including chordal and circular-arc graphs, and we discuss variants of independent domination problems.
\end{abstract}


\section{Introduction}

Several natural graph classes are known to have polynomially many minimal separators, w.r.t. the number $n$ of vertices of the graph. It is the case for \emph{chordal} graphs, which have at most $n$ minimal separators~\cite{RTL76}, \emph{weakly chordal}, \emph{circular-arc} and \emph{circle} graphs, which have $\cO(n^2)$ minimal separators~\cite{BoTo01,KKW98}.

The property of having polynomially many minimal separators has been used in algorithms for decades, initially in an ad-hoc manner, i.e., algorithms were based on minimal separators but also other specific features of particular graph classes (see, e.g., ~\cite{BKK93,KKW98}). Later, it was observed that minimal separators are sufficient for solving problems like \textsc{Treewidth} or  \textsc{Minimum fill-in}~\cite{BoTo01,BoTo02}. Both problems are related to \emph{minimal triangulations}. Given an arbitrary graph $G$, a minimal triangulation is a minimal chordal supergraph $H$ of $G$, on the same vertex set. Bouchitt\'e and Todinca~\cite{BoTo01} introduced the notion of \emph{\pmc}, that is, a vertex set of $G$ inducing a maximal clique in some minimal triangulation $H$ of $G$. Their algorithm for treewidth is based on dynamic programming over minimal separators and \pmc s. The same authors proved that the number of \pmc s is polynomially bounded in the number of minimal separators~\cite{BoTo02}. 

Fomin and Villanger~\cite{FoVi10} found a more surprising application of minimal separators and \pmc s, proving that they were sufficient for solving problems like \textsc{Maximum Independent Set},  \textsc{Maximum Induced Forest}, and more generally for finding a maximum induced subgraph $G[F]$ of treewidth at most $t$, where $t$ is a constant.

More formally, let $\poly$ be some polynomial. We call $\Gpoly$ the family of graphs  such that $G \in \mathcal{G}_{\poly}$ if and only if $G$ has at most $\poly(n)$ minimal separators. By~\cite{FoVi10}, the problem of finding a maximum induced subgraph of treewidth at most $t$ can be solved in polynomial time on $\Gpoly$. The exponent of the polynomial depends on $\poly$ and on $t$. In~\cite{FTV15}, Fomin \textit{et al.} further extend the technique to compute large induced subgraphs of bounded treewidth, and satisfying some CMSO property (expressible in counting monadic second-order logic). That allows to capture problems like \textsc{Longest induced path}. They also point out some limits of the approach. It is asked in~\cite{FTV15} whether the techniques can be extended for solving the \textsc{Connected Vertex Cover} problem, which is equivalent to finding a maximum independent set $F$ such that $G - F$ is connected. More generally, their algorithm computes an induced subgraph $G[F]$ of treewidth at most $t$ satisfying some CMSO property, but is not able to ensure any property relating the induced subgraph to the initial graph. 


Here we make some progress in this direction.
First, we consider the problem \textsc{Distance-$d$ Independent Set} on $\Gpoly$, where the goal is to find a maximum independent set $F$ of the input graph $G$, such that the vertices of $F$ are at pairwise distance at least $d$ in $G$ (in the literature this problem is also known as {\sc $d$-Scattered-Set}). This is equivalent to finding a maximum independent set in graph $G^{d-1}$, the $(d-1)$-th power of $G$. Eto \textit{et al.}~\cite{EGM14} already studied the problem on chordal graphs, and proved that it is polynomial for every even $d$, and NP-hard for any odd $d\geq 3$ (it is even $W[1]$-hard when parameterized by the solution size). Their positive result is based on the observation that for any even $d$, if $G$ is chordal then so is $G^{d-1}$. Eto \textit{et al.}~\cite{EGM14} ask if \textsc{Distance-$d$ Independent Set} is polynomial on chordal bipartite graphs (which are \emph{not} chordal but weakly chordal, see Section~\ref{se:prelim}), a subclass of $\Gpoly$. We bring a positive answer to their question for even values $d$, by a result of combinatorial nature: for any graph $G$ and any odd $k$, the graph $G^k$ has no more minimal separators than $G$ (see Section~\ref{se:pow}). Consequently,  \textsc{Distance-$d$ Independent Set} is polynomial on $\Gpoly$, for any even value $d$ and any polynomial $\poly$, and NP-hard for any odd $d \geq 3$ and any $\poly(n)$ asymptotically larger than $n$. Such a dichotomy between odd and even values also appears when computing large $d$-clubs, that are induced subgraphs of diameter at most $d$~\cite{GHKR14}, and for quite similar reasons.

Second, we consider \textsc{Connected Vertex Cover}, \textsc{Connected Feedback Vertex Set} and more generally the problem of finding a maximum induced subgraph $G[F]$ of treewidth at most $t$, such that $G - F$ is connected. We show (Section~\ref{se:cvc}) that the problems are polynomially solvable for subclasses of $\Gpoly$, like chordal and circular-arc graphs. 
This does not settle the complexity of these problems on $\Gpoly$. As we shall discuss in Section~\ref{se:ind}, when restricted to bipartite graphs in $\Gpoly$, \textsc{Connected Vertex Cover} can be reduced from \textsc{Red-Blue Dominating Set} (see~\cite{DLS14}). It might be that this latter problem is NP-hard on bipartite graphs of $\Gpoly$; that was our hope, since the very related problem \textsc{Independent Dominating Set} is NP-hard on chordal bipartite graphs~\cite{DMK90}, and on circle graphs~\cite{BGMPST14}. This question is still open, however we will observe that the \textsc{Red-Blue Dominating Set} is polynomial on the two natural classes of bipartite graphs with polynomially many minimal separators: chordal bipartite and  circle bipartite graphs.

\section{Preliminaries}\label{se:prelim}

Let $G = (V,E)$ be a graph.  Let $dist_G(u,v)$ denote the distance between vertices $u$ and $v$ (the minimum number of edges of a $uv$-path). We denote by $N_G^k[v]$ the set of vertices at distance at most $k$ from $v$. Let also $N_G^k(v) = N_G^k[v] \setminus \{v\}$, and we call these sets the \emph{closed} and \emph{open neighborhoods at distance} $k$ of $v$, respectively.  Similarly, for a set of vertices $U \subseteq V$, we call the sets $N_G^k(U) = \cup_{u\in U} N_G^k(u) \backslash~ U$ and $N_G^k[U] = \cup_{u\in U} N_G^k[u]$ the open and closed neighborhoods at distance $k$ of $U$, respectively. For $k=1$, we simply denote by $N_G(U)$, respectively $N_G[U]$, the open and closed neighborhoods of $U$; the subscript is omitted if clear from the context.

A \emph{clique} (resp. \emph{independent set}) of $G$ is a set of pairwise adjacent (resp. non-adjacent) vertices. A distance-$d$ independent set is a set of vertices at pairwise distance at least $d$. Equivalently, it is an independent set of the $(d-1)$-th power $G^{d-1}$ of $G$. Graph $G^k = (V,E^k)$ is obtained from $G$ by adding an edge between every pair of vertices at distance at most~$k$.

Given a vertex subset $C$ of $G$, we denote by $G[C]$ the subgraph induced by $C$. We say that $C$ is a connected component of $G$ if $G[C]$ is connected and $C$ is inclusion-maximal for this property. For $S \subseteq V$, we simply denote $G - S$ the graph $G[V \setminus S]$. We say that $S$ is a \emph{$a,b$-minimal separator} of $G$ if $a$ and $b$ are in  distinct components $C$ and $D$ of $G-S$, and $N(C) = N(D) = S$. We also say that $S$ is a minimal separator if it is an $a,b$-minimal separator for some pair of vertices $a$ and $b$.

\begin{proposition}[\cite{BBH00}]\label{prop:min_sep}
Let $G = (V,E)$ be a graph,  $C$ be a connected set of vertices, and let $D$ be a component of $G - N[C]$. Then $N(D)$ is an $a,b$-minimal separator of $G$, for any $a \in C$ and $b \in D$. 
\end{proposition}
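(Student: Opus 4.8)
The plan is to set $S = N(D)$ and verify directly the two defining conditions of an $a,b$-minimal separator: that $a$ and $b$ fall into distinct components of $G - S$, and that $S$ is the full open neighborhood of each of these two components. The whole argument rests on one preliminary inclusion, which I expect to be the only delicate point, namely $N(D) \subseteq N(C)$.

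I would establish that inclusion first. Since $D$ is a connected \emph{component} of $G - N[C]$, any vertex lying outside $N[C]$ but adjacent to $D$ would itself belong to $D$ by maximality; hence every neighbor of $D$ lies in $N[C] = C \cup N(C)$. On the other hand $D$ is disjoint from $N[C]$, so no vertex of $D$ lies in $N(C)$, which means no vertex of $C$ is adjacent to $D$, i.e. $N(D) \cap C = \emptyset$. Combining the two facts gives $N(D) \subseteq N[C] \setminus C = N(C)$. This is exactly what couples the neighborhood of the ``far'' component $D$ to that of the ``anchor'' set $C$, and it is where maximality of $D$ (as opposed to it being an arbitrary connected set) is essential.

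Next I would locate the two components of $G - S$. Deleting $S = N(D)$ removes all external neighbors of $D$, so $D$ stays connected and cut off from the rest of the graph; thus $D$ is itself a component of $G - S$, and it contains $b$, with $N(D) = S$ by the choice of $S$. For the other side, $C$ is connected and disjoint from $S$ (because $N(D) \cap C = \emptyset$), so all of $C$ lies inside a single component $C'$ of $G - S$, namely the one containing $a \in C$. Since $a \notin D$, this component $C'$ is distinct from $D$, giving the required separation of $a$ and $b$.

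Finally I would check $N(C') = S$. One inclusion is automatic: as $C'$ is a component of $G - S$, any vertex adjacent to $C'$ but not in $C'$ must have been deleted, so $N(C') \subseteq S$. For the reverse, take $s \in S = N(D)$; by the preliminary inclusion $s \in N(C)$, so $s$ is adjacent to some vertex of $C \subseteq C'$, and $s \notin C'$ since $s \in S$, whence $s \in N(C')$. Therefore $S \subseteq N(C')$, so $N(C') = N(D) = S$, and $S$ is an $a,b$-minimal separator. Everything after the inclusion $N(D) \subseteq N(C)$ is routine bookkeeping with the notion of a component of $G - S$; that inclusion is the one step I would treat with care.
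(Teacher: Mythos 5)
Your proof is correct; the paper itself gives no proof of this proposition (it is imported from the cited reference \cite{BBH00}), and your argument is the standard one. The key inclusion $N(D)\subseteq N(C)$, obtained from the maximality of $D$ as a component of $G-N[C]$ together with $D\cap N(C)=\emptyset$, is exactly the step that makes the rest routine, and you verify both defining conditions of an $a,b$-minimal separator cleanly.
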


\subsection{Graph classes}

A graph is \emph{chordal} if it has no induced cycle with more than three vertices. A graph $G$ is \emph{weakly chordal} if $G$ and its complement $\overline{G}$ have no induced cycle with more than four vertices. 

The classes of \emph{circle} and \emph{circular-arc graphs} are defined by their intersection model. A graph $G$ is a \emph{circle graph} (resp. a \emph{circular-arc graph}) if every vertex of the graph can be associated to a chord (resp. to an arc) of a circle such that two vertices are adjacent in $G$ if and only if the corresponding chords (resp. arcs) intersect. We may assume w.l.o.g. that, in the intersection model, no two chords (resp.  no two arcs) share an endpoint. On the circle, we add a \emph{scanpoint} between each two consecutive endpoints of the set of chords (resp. arcs). A \emph{scanline} is a line segment between two scanpoints. Given an intersection model of a circle (resp. circular-arc) graph $G$, for any minimal separator $S$ of $G$ there is a scanline such that the vertices of $S$ correspond exactly to the chords (resp. arcs) intersecting the scanline, see, e.g.,~\cite{KKW98}. 

Chordal graphs have at most $n$ minimal separators~\cite{RTL76}; weakly chordal, circle and circular-arc graphs all have $\cO(n^2)$ minimal separators~\cite{BoTo01,KKW98}.

\begin{definition}\label{de:gpoly}
Let $\poly$ be some polynomial. We call $\mathcal{G}_{\poly}$ the family of graphs  such that $G \in \mathcal{G}_{\poly}$ if and only if $G$ has at most $\poly(n)$ minimal separators, where $n = |V(G)|$.
\end{definition}

\subsection{Dynamic programming over minimal triangulations}\label{ss:FoVi10}

Let $G=(V,E)$ be an arbitrary graph. A chordal supergraph $H = (V, E')$ (i.e., with $E \subseteq E'$), is called a \emph{triangulation} of $G$. If, moreover, $E'$ is inclusion-minimal among all possible triangulations, we say that $H$ is a \emph{minimal triangulation} of $G$.

The \emph{treewidth} of a chordal graph is its maximum clique size, minus one. Forests have treewidth $1$, and graphs with no edges have treewidth $0$. The treewidth $\tw(G)$ of an arbitrary graph $G$ is the minimum treewidth over all (minimal) triangulations $H$ of $G$.

Cliques of minimal triangulations play a central role in treewidth. A \emph{potential maximal clique} of $G$ is a set of vertices that induces a maximal clique in some minimal triangulation $H$ of $G$. By~\cite{BoTo01}, if $\Omega$ is a \pmc, then for every component $C_i$ of $G - \Omega$, its neighborhood $S_i$ is a minimal separator. Moreover, the sets $S_i$ are exactly the minimal separators of $G$ contained in $\Omega$.

\begin{proposition}[\cite{BBC00,BoTo02}]
For any polynomial $\poly$, there is a polynomial-time algorithm enumerating the minimal separators and the \pmc s of graphs on $\Gpoly$.
\end{proposition}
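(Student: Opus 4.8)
The statement splits into two tasks: enumerating the minimal separators, and then the \pmc s. Write $\Delta_G$ for the set of minimal separators of $G$ and $\Pi_G$ for its set of \pmc s; by hypothesis $|\Delta_G| \le \poly(n)$. For each task my plan is to produce a polynomially bounded family of \emph{candidate} vertex sets together with a polynomial-time test that recognises the genuine objects among the candidates, and then to invoke a completeness result guaranteeing that no minimal separator (resp.\ \pmc) is missed.

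For the minimal separators I would use the closure procedure of Berry, Bordat and Cogis~\cite{BBC00}. The seed family consists of the sets $N_G(D)$ where, for each vertex $v$, $D$ ranges over the connected components of $G - N_G[v]$; each is a minimal separator by Proposition~\ref{prop:min_sep} applied with $C = \{v\}$. The generation step takes an already discovered separator $S$ together with a vertex $x \in S$, removes $S \cup N_G(x)$ from $G$, and adds the neighborhoods $N_G(D)$ of the resulting components; by~\cite{BBC00} every set so produced is again a minimal separator. The algorithm maintains a worklist, starts from the seeds, and closes under the generation step. Each stored separator fires at most $n$ generation steps (one per $x \in S$), each of polynomial cost, and the worklist never holds more than $|\Delta_G| \le \poly(n)$ distinct sets, so the whole process terminates in $\poly(n)$ time. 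Its correctness rests on the completeness theorem of~\cite{BBC00}: the closure of the seed family under the generation step is \emph{exactly} $\Delta_G$.

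For the \pmc s I would combine two results. The recognition criterion of Bouchitt\'e and Todinca~\cite{BoTo01} states that $\Omega$ is a \pmc\ if and only if (i) no component $C$ of $G - \Omega$ is \emph{full}, i.e.\ satisfies $N(C) = \Omega$, and (ii) every pair of non-adjacent vertices $u, v \in \Omega$ is contained in $N(C)$ for some component $C$ of $G - \Omega$; both conditions are checkable in polynomial time once $\Omega$ and $\Delta_G$ are available. The structural and counting result of~\cite{BoTo02} then supplies the candidates: it bounds $|\Pi_G|$ by a polynomial in $|\Delta_G|$ and shows that every \pmc\ can be assembled, by a fixed construction, from a single vertex $x$ together with one or two minimal separators of $G$. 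Ranging over all such data yields $\cO(n\,|\Delta_G|^2) = \poly(n)$ candidate sets; passing each through the test of~\cite{BoTo01} leaves precisely $\Pi_G$, in polynomial total time.

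The routine ingredients are the two recognition tests and the worklist bookkeeping. The real content, in both halves, is \emph{completeness} of the candidate families: that the neighborhood-based closure reaches every minimal separator, and that the ``one vertex plus one or two separators'' construction reaches every \pmc. These are exactly the nontrivial combinatorial theorems of~\cite{BBC00} and~\cite{BoTo02}, which I would cite rather than reprove; the proposition then follows by assembling candidate generation, the bound $|\Delta_G| \le \poly(n)$, and the polynomial-time recognition tests.
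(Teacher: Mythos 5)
The paper does not prove this proposition at all --- it is imported verbatim from the cited works, with \cite{BBC00} supplying the minimal-separator enumeration and \cite{BoTo02} the \pmc{} enumeration, which is exactly the division of labor and the deferral of the completeness theorems in your sketch. Your account of both cited algorithms is accurate in substance (the only simplification is that the \pmc{} enumeration of \cite{BoTo02} is organized incrementally over vertex additions rather than as a one-shot ``vertex plus one or two separators'' template, but the resulting $\cO(n\,|\Delta_G|^2)$ candidate bound is the right one), so your proposal matches the paper's treatment.
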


Minimal separators and \pmc s have been used for computing treewidth and other parameters related to minimal triangulations, on $\Gpoly$. Fomin and Villanger~\cite{FoVi10} extend the techniques to a family of problems:

\begin{proposition}[\cite{FoVi10}]\label{pr:FoVi10}
For any polynomial $\poly$ and any constant $t$, there is a polynomial algorithm computing a \textsc{Maximum Induced Subgraph of Treewidth at most $t$} on $\Gpoly$.
\end{proposition}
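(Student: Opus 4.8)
The plan is to recover the Bouchitt\'e--Todinca dynamic programming scheme for treewidth over minimal separators and \pmc s, and to enrich it so that it optimizes over induced subgraphs. First I would record the standard characterization: $G[F]$ has treewidth at most $t$ if and only if it admits a tree decomposition all of whose bags have size at most $t+1$, equivalently a minimal triangulation whose maximal cliques have size at most $t+1$. The entire difficulty is to optimize $|F|$ without iterating over the exponentially many candidate sets $F$.

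The combinatorial heart of the argument is an \emph{alignment lemma}: for every $F$ with $\tw(G[F])\le t$ there is a tree decomposition $\mathcal{T}_F$ of $G[F]$ of width at most $t$ such that each bag equals $\Omega\cap F$ for some \pmc{} $\Omega$ of $G$, and the intersection of any two adjacent bags equals $S\cap F$ for some minimal separator $S$ of $G$. To prove it I would take a minimal triangulation $H_F$ of $G[F]$ of width $\le t$ and show that it extends to a minimal triangulation $H$ of the whole graph $G$ whose restriction to $F$ still has width $\le t$; then the maximal cliques of $H$ are \pmc s of $G$ and the adhesions of its clique tree are minimal separators of $G$, so intersecting the associated tree decomposition with $F$ yields the desired aligned decomposition. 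The crucial point is that although a minimal separator $S$ of $G$ may be large, its trace $S\cap F$ is an adhesion of a width-$\le t$ decomposition and hence has size at most $t+1$.

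Given the lemma, I would run dynamic programming over \emph{blocks}. A block is a pair $(S,C)$ where $S$ is a minimal separator and $C$ is a full component of $G-S$ (that is, $N(C)=S$); its realization is $G[S\cup C]$. For each block and each \emph{trace} $X\subseteq S$ with $|X|\le t+1$ (the intended value of $F\cap S$), the table stores the maximum of $|F\cap(S\cup C)|$ over all $F$ whose restriction to $S\cup C$ has treewidth $\le t$, is consistent with $X$ on $S$, and admits an aligned decomposition anchored at $S$. The transition guesses the top bag: a \pmc{} $\Omega$ with $S\subseteq\Omega\subseteq S\cup C$ and a trace $Y=F\cap\Omega$ with $Y\cap S=X$ and $|Y|\le t+1$; the children are the components $C_i$ of $G[S\cup C]-\Omega$ together with their neighborhoods $S_i=N(C_i)\subseteq\Omega$, which by the recalled structure of \pmc s are again valid blocks. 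Discounting the separators shared with the parent, the recurrence reads
\[
\operatorname{val}(S,C,X)=\max_{\Omega,\,Y}\Bigl(|Y|+\sum_i\bigl(\operatorname{val}(S_i,C_i,Y\cap S_i)-|Y\cap S_i|\bigr)\Bigr),
\]
and a global root step ranging over all \pmc s $\Omega$ of $G$ with the components of $G-\Omega$ returns the optimum $|F|$.

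Finally I would bound the running time. By the enumeration result for $\Gpoly$ there are $\poly(n)$ minimal separators and $\poly(n)$ \pmc s, each full component is attached to a separator so there are at most $\poly(n)\cdot n$ blocks, and because $t$ is constant there are only $\cO(n^{t+1})$ admissible traces per separator; hence $\poly(n)$ states, each evaluated by scanning $\poly(n)$ \pmc s and $\cO(n^{t+1})$ traces, giving a polynomial algorithm. I expect the alignment lemma to be the main obstacle: one must argue both that a suitably chosen minimal triangulation of $G$ restricts to a genuine, low-width tree decomposition of $G[F]$ whose bags and adhesions are traces of \pmc s and minimal separators of $G$, and conversely that every aligned decomposition counted by the recurrence corresponds to a feasible $F$, so that the dynamic program is neither optimistic nor lossy. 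The no-double-counting bookkeeping in the recurrence and the handling of leaf blocks ($\Omega=S\cup C$) are routine once the lemma is in place.
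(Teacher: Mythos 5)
Your proposal reconstructs exactly the Fomin--Villanger scheme that the paper recalls in Subsection~2.3: your alignment lemma is Proposition~\ref{pr:compat} together with the remark that every maximal clique of the respecting triangulation meets $F$ in at most $t+1$ vertices, and your single recurrence is the composition of Equations~\eqref{eq:afb}--\eqref{eq:bfg} (the $\alpha/\beta/\gamma_i$ decomposition over blocks and good triples), with the same base case $\Omega=S\cup C$ and the same root convention. This is essentially the paper's approach, so no further comparison is needed.
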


Clearly, \textsc{Maximum Independent Set} (which is equivalent to \textsc{Minimum Vertex Cover}) and \textsc{Maximum Induced Forest} (which is equivalent to \textsc{Minimum Feedback Vertex Set}) fit into this framework: they consist in finding maximum induced subgraphs $G[F]$ of treewidth at most 0, respectively at most 1. 
The first ingredient of~\cite{FoVi10} is the following observation.

\begin{proposition}[\cite{FoVi10}]\label{pr:compat}
Let $G = (V,E)$ be a graph, $F \subseteq V$, and let $H_F$ be a minimal triangulation of $G[F]$. There exists a minimal triangulation $H_G$ of $G$ such that $H_G[F] = H_F$. 
We say that $H_G$ \emph{respects} the minimal triangulation $H_F$ of $G[F]$.
\end{proposition}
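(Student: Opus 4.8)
The plan is to produce the desired triangulation in two stages: first construct \emph{some} triangulation of $G$ whose restriction to $F$ is exactly $H_F$, and then shrink it down to a minimal triangulation of $G$ without ever disturbing the edges inside $F$. Throughout I use two standard facts about minimal triangulations. First, an edge $uv$ of a chordal graph $H$ can be deleted while preserving chordality if and only if $N_H(u)\cap N_H(v)$ induces a clique in $H$ (otherwise two non-adjacent common neighbours of $u$ and $v$ form a chordless $4$-cycle after deletion); call such an edge \emph{removable}. Second, a triangulation $H$ of $G$ is minimal if and only if no \emph{fill edge} of $H$, i.e. no edge of $E(H)\setminus E(G)$, is individually removable.

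For the first stage, let $G^{+}$ be obtained from $G$ by adding all fill edges of $H_F$, so that $G^{+}[F]=H_F$ and $E(G)\subseteq E(G^{+})$. Fix a perfect elimination ordering $f_1,\dots,f_m$ of the chordal graph $H_F$ (an ordering in which the later neighbours of each vertex form a clique), and run the elimination game on $G^{+}$ (eliminate vertices one by one, each time turning the current neighbourhood of the eliminated vertex into a clique, so that the resulting fill graph is chordal): eliminate $f_1,\dots,f_m$ first in this order, then the vertices of $V\setminus F$ in arbitrary order. Let $H^{0}$ be the resulting graph; since $H^{0}$ is chordal and contains $G$, it is a triangulation of $G$. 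I claim $H^{0}[F]=H_F$. The key observation, proved by induction on $k$, is that after eliminating $f_1,\dots,f_{k-1}$ the graph induced on the remaining set $\{f_k,\dots,f_m\}$ is still exactly $H_F[\{f_k,\dots,f_m\}]$: when $f_{k-1}$ is eliminated, its neighbours inside the remaining part of $F$ are, by the induction hypothesis, precisely its higher neighbours in $H_F$, which already form a clique because $f_1,\dots,f_m$ is a perfect elimination ordering of $H_F$; hence completing the neighbourhood of $f_{k-1}$ creates no new edge inside $F$. Eliminating the vertices of $V\setminus F$ afterwards adds no edge inside $F$ either, so $H^{0}[F]=H_F$.

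For the second stage, start from $H^{0}$ and repeatedly delete any fill edge of $G$ that has at least one endpoint outside $F$ and is removable, until none remains; let $H_G$ be the result. This process never deletes an edge with both endpoints in $F$, so $H_G[F]=H^{0}[F]=H_F$, and $H_G$ is a chordal supergraph of $G$, hence a triangulation of $G$. It remains to check minimality. By construction no fill edge with an endpoint outside $F$ is removable, so consider a fill edge $uv$ with $u,v\in F$. This is a fill edge of $H_F$ with respect to $G[F]$, and since $H_F$ is a \emph{minimal} triangulation of $G[F]$ it is not removable in $H_F$: there exist non-adjacent $x,y\in N_{H_F}(u)\cap N_{H_F}(v)$. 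As $H_G[F]=H_F$, the vertices $x,y$ remain non-adjacent common neighbours of $u$ and $v$ in $H_G$, so $uv$ is not removable in $H_G$ either. Thus no fill edge of $H_G$ is removable, $H_G$ is a minimal triangulation of $G$, and $H_G[F]=H_F$, as required.

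The main obstacle is precisely the interplay in the second stage between minimality with respect to $G$ and the constraint $H_G[F]=H_F$: the fill edges we must retain inside $F$ are exactly those whose non-removability (relative to all of $G$) is in question, and their common neighbourhoods in $H_G$ may contain vertices of $V\setminus F$ that we do not control. What saves the argument is that we are lifting a \emph{minimal} triangulation $H_F$, not an arbitrary one, so each inside-$F$ fill edge is certified non-removable by two non-adjacent common neighbours that lie \emph{inside} $F$ and therefore survive in $H_G$. Minimality of $H_F$ is genuinely needed here: taking $F=V$ the statement reduces to $H_G=H_F$, which is a minimal triangulation of $G$ only when $H_F$ already is one. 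A secondary and more routine point is the inductive lemma of the first stage, guaranteeing that eliminating $F$ first along a perfect elimination ordering of $H_F$ introduces no spurious edge inside $F$.
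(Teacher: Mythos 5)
Your argument is correct. Note, however, that the paper does not prove this proposition at all: it is imported verbatim from Fomin and Villanger~\cite{FoVi10}, so there is no in-paper proof to compare against. Your two-stage construction --- first lifting $H_F$ to a (non-minimal) triangulation $H^0$ of $G$ by running the elimination game on $G$ augmented with the fill of $H_F$, eliminating $F$ first along a perfect elimination ordering of $H_F$ so that no spurious edge appears inside $F$; then pruning only removable fill edges with an endpoint outside $F$ --- is sound, and the two facts you rely on (the Rose--Tarjan--Lueker characterization that a triangulation is minimal iff no single fill edge is removable, and the criterion that $uv$ is removable iff $N_H(u)\cap N_H(v)$ is a clique) are exactly the standard tools for this statement. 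Your closing observation, that minimality of $H_F$ supplies, for each retained fill edge inside $F$, a certifying non-adjacent pair of common neighbours lying entirely within $F$ and hence surviving the pruning, is the one genuinely delicate point of the proof and you handle it correctly; this matches the way the result is usually established in the literature.
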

Note that, for any clique $\Omega$ of $H_G$, we have that $F \cap \Omega$ induces a clique in $H_F$. In particular, if $\tw(G[F]) \leq t$ and the clique size of $H_F$ is at most $t+1$, then every maximal clique of $H_G$ intersects $F$ in at most $t+1$ vertices.

The second ingredient is a dynamic programming scheme that we describe below. Let $S$ be a minimal separator of $G$, and $C$ be a component of $G - S$ such that $N(C) = S$. The pair $(S,C)$ is called a \emph{block}. Let $\Omega$ be a \pmc\ such that $S \subset \Omega \subseteq S \cup C$. Then $(S,C,\Omega)$ is called a \emph{good triple}.
In the sequel, $W$ denotes a set of at most $t+1$ vertices. 

\begin{definition}\label{de:partcomp}
Let $(S,C)$ (resp. $(S,\Omega,C)$) be a block (resp. a good triple) and let $W \subseteq S$ (resp. $W \subseteq \Omega$) be a set of vertices of size at most $t+1$ . We say that a vertex set $F$ is a \emph{partial solution compatible with $(S,C,W)$} (resp. with $(S,C,\Omega,W)$) if:
\begin{enumerate}
\item $G[F]$ is of treewidth at most $t$,
\item $F \subseteq S \cup C$,
\item $W = F \cap S$ (resp. $W = F \cap \Omega$),
\item there is a minimal triangulation $H$ of $G$ respecting some minimal triangulation of $G[F]$ of treewidth at most $t$, such that $S$ is a minimal separator (resp. $S$ is a minimal separator and $\Omega$ is a maximal clique) of $H$.
\end{enumerate}
\end{definition}

Observe that the two variants of compatibility differ by parameter $\Omega$ and the last two conditions.
We denote by $\alpha(S,C,W)$ (resp. $\beta(S,C,\Omega,W)$) the size of a largest partial solution compatible with $(S,C,W)$ (resp. $(S,C,\Omega,W)$). We now show how these quantities can be computed over all blocks and all good triples. The dynamic programming will proceed by increasing size over the blocks $(S,C)$, the size of the block being $|S \cup C|$.

It is based on the following equations (see~\cite{FoVi10,FTV15} for details and proofs and Figure~\ref{fi:alphabeta} for an illustration).

\paragraph{Base case.} It occurs for good triples $(S,C,\Omega)$ such that $\Omega = S \cup C$. In this case, for each subset $W$ of $\Omega$ of size at most $t+1$,

\begin{equation}\label{eq:base}
\beta(S,C,\Omega,W) = |W|.
\end{equation}

\begin{figure}[h]
\begin{center}
\includegraphics[scale=0.5]{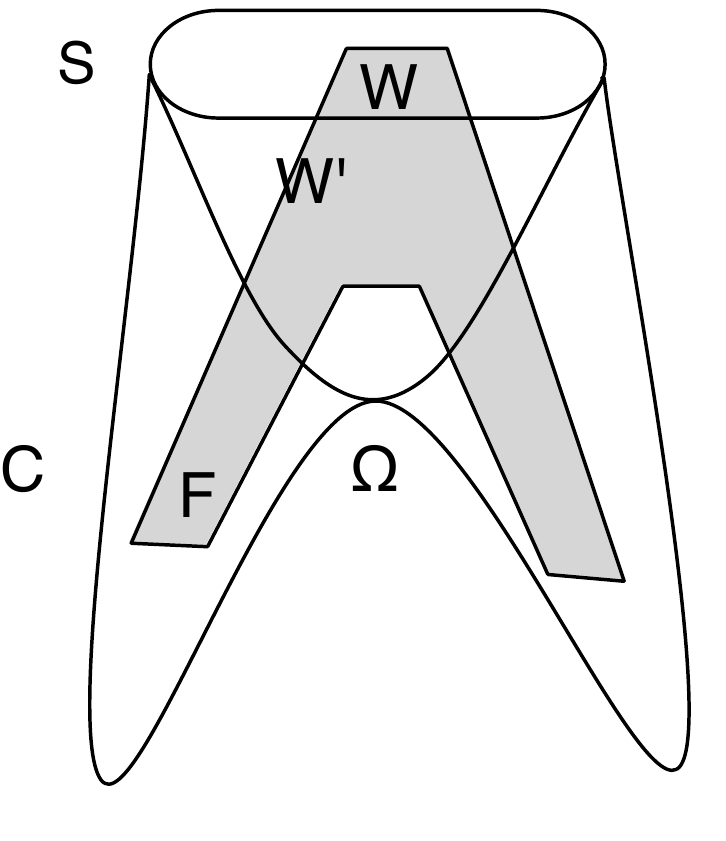}\hspace{1cm}
\includegraphics[scale=0.5]{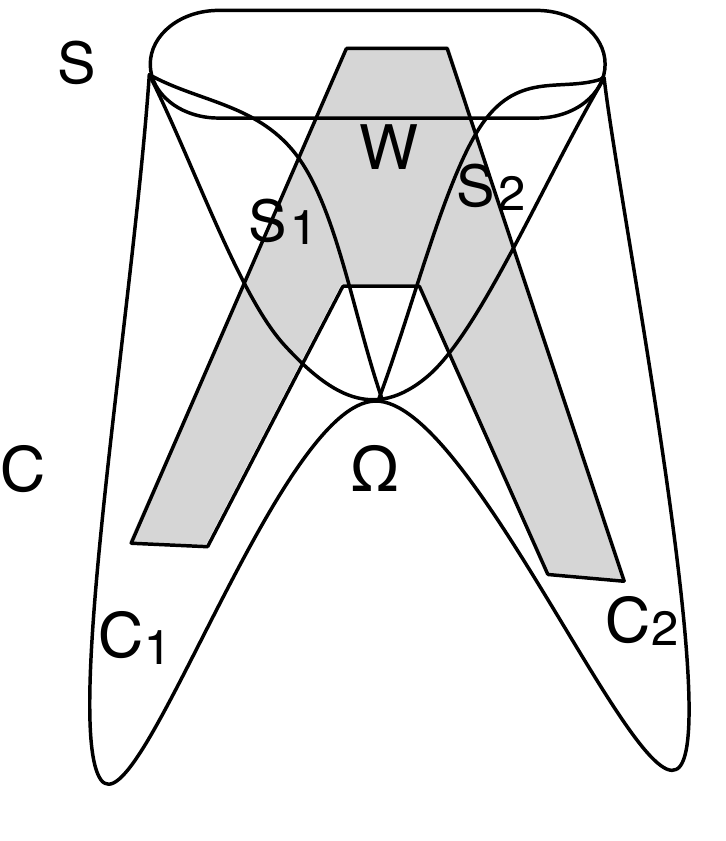}
\end{center}
\vspace{-1cm}
\caption{Computing $\alpha$ form $\beta$ (left), and $\beta$ from $\alpha$ (right).} 
\label{fi:alphabeta}
\end{figure}

\paragraph{Computing $\alpha$ from $\beta$.} The following equation allows to compute the $\alpha$ values from $\beta$ values:

\begin{equation}\label{eq:afb}
\alpha(S,C,W) = \max_{\Omega, W'} \beta(S,C,\Omega,W'),
\end{equation}
where the maximum is taken over all \pmc s $\Omega$ such that $(S,C,\Omega)$ is a good triple, and all subsets $W'$ of $\Omega$, of size at most $t+1$, such that $W = W' \cap S$.

\paragraph{Computing $\beta$ from $\alpha$.} Let $(S,C,\Omega)$ be a good triple, and fix an order $C_1, C_2,\dots,C_p$ on the connected components of $G[C \setminus \Omega]$. Let $S_i = N_G(C_i)$, for all $1 \leq i \leq p$. By~\cite{BoTo01}, $(S_i,C_i)$ are also blocks of $G$. 

A partial solution $F$ compatible with $(S,C,\Omega,W)$ is obtained as a union of partial solutions $F_i$ compatible with $(S_i,C_i,W \cap S_i)$, for each $1 \leq i \leq p$, and the set $W$.
 
Denote by $\gamma_i(S,C,\Omega,W)$ the size of the largest partial solution $F$ compatible\footnote{To be precise, the $\gamma$ function is not required at this stage, if we only compute largest induced subgraphs of treewidth at most $t$. However it becomes necessary when we request the solution to satisfy additional properties, as it will happen in Section~\ref{se:cvc}.} with $(S,C,\Omega,W)$, contained in $\Omega \cup C_1 \cup \dots \cup C_i$ (hence $F$ is not allowed to intersect the components $C_{i+1}$ to $C_p$).

We have the following equations. 

\begin{equation}\label{eq:g1}
\gamma_1(S,C,\Omega,W) = \alpha(S,C,\Omega,W \cap S_1) + |W| - |W \cap S_1|.
\end{equation}

For all $i$, $2 \leq i \leq p$,
\begin{equation}\label{eq:gi}
\gamma_i(S,C,\Omega,W) = \gamma_{i-1}(S,C,\Omega,W) + \alpha(S,C,\Omega,W \cap S_i) - |W \cap S_i|.
\end{equation}

and finally

\begin{equation}\label{eq:bfg}
\beta(S,C,\Omega,W) = \gamma_p(S,C,\Omega,W).
\end{equation}

For convenience we also consider that $\emptyset$ is a minimal separator, and $(\emptyset,V)$ is a block. Then the size of the optimal global solution is simply $\alpha(\emptyset,V,\emptyset)$. The algorithm can be adapted to output an optimal solution, not only its size.

\section{Powers of graphs with polynomially many minimal separators}\label{se:pow}

Let us prove that for any odd $k$, graph $G^k$ has no more minimal separators than $G$. 

\begin{theorem}\label{th:main}
Consider a graph $G$, an odd number $k=2l+1$ with $l\geq 0$, and a minimal separator $\overline{S}$ of $G^k$. Then there exists a minimal separator $S$ of $G$ such that $\overline{S}  = N^{l}_G[S]$.
\end{theorem}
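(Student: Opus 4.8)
The plan is to fix two full components $\overline{C}$ and $\overline{D}$ of $G^k - \overline{S}$, with $a \in \overline{C}$, $b\in\overline{D}$ and $N_{G^k}(\overline{C}) = N_{G^k}(\overline{D}) = \overline{S}$, and to read $S$ off the BFS layers of $G$ around $\overline{C}$. First I would record the metric dictionary between $G$ and $G^k$: two vertices are adjacent in $G^k$ iff their $G$-distance is at most $k$. This immediately yields a layer description of the separator. Writing $L_i = \{v : dist_G(v,\overline{C})=i\}$, a vertex lies in $\overline{S}$ iff it is $G^k$-adjacent to $\overline{C}$ but not in $\overline{C}$, so $\overline{S} = L_1 \cup \dots \cup L_{2l+1}$; moreover no vertex outside $\overline{C} \cup \overline{S}$ is $G^k$-adjacent to $\overline{C}$, whence $dist_G(\overline{C},\overline{D}) \ge k+1 = 2l+2$ and $\overline{D}$ lives in the layers $L_{\ge 2l+2}$.

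Next I would produce the candidate separator via Proposition~\ref{prop:min_sep}. The key preparatory lemma is that $C^{*} := N^{l}_G[\overline{C}] = L_0\cup\dots\cup L_l$ is connected in $G$: any $G^k$-edge inside $\overline{C}$ corresponds to a $G$-path of length at most $k=2l+1$, all of whose vertices lie within distance $l$ of its endpoints and hence inside $C^{*}$; chaining such paths connects all of $\overline{C}$ inside $G[C^{*}]$, and every vertex of $C^{*}$ reaches $\overline{C}$ along a shortest path that stays in $C^{*}$. A one-line distance argument then gives $N_G(C^{*}) = L_{l+1}$, so $N_G[C^{*}] = L_0\cup\dots\cup L_{l+1}$ and $G - N_G[C^{*}]$ is exactly $D^{*} := \{v : dist_G(v,\overline{C})\ge l+2\}$, which contains $\overline{D}$. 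Letting $D_b$ be the component of $D^{*}$ containing $b$ and setting $S := N_G(D_b)$, Proposition~\ref{prop:min_sep} guarantees that $S$ is a minimal separator of $G$, and by construction $S \subseteq L_{l+1}$.

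It remains to prove $\overline{S} = N^{l}_G[S]$. The easy inclusion is $N^{l}_G[S]\subseteq \overline{S}$: since $S\subseteq L_{l+1}$, every vertex within distance $l$ of $S$ lies within distance $l$ of $L_{l+1}$, hence in $L_1\cup\dots\cup L_{2l+1}=\overline{S}$ by the triangle inequality. The reverse inclusion $\overline{S} \subseteq N^{l}_G[S]$ is the crux, and here the full-component property $\overline{S} = N_{G^k}(\overline{D})$ enters. Given $v\in\overline{S}$ I would take a shortest $G$-path from $v$ to $\overline{D}$, of length $j \le k = 2l+1$; its tail, the vertices at distance $\ge l+2$ from $\overline{C}$, lies entirely in $D^{*}$ and is connected to $\overline{D}$, hence contained in $D_b$, so the path meets $D_b$ within $\max(0,j-l)\le l+1$ steps of $v$. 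If it enters $D_b$ at a positive step, the vertex just before entry is a neighbour of $D_b$, i.e.\ a vertex of $S$, at distance at most $l$ from $v$; if $v$ itself already lies in $D_b$ (which forces $dist_G(v,\overline{C})\ge l+2$), then a shortest path from $v$ to $\overline{C}$ leaves $D_b$ through a vertex of $S=N_G(D_b)$ at distance $dist_G(v,\overline{C})-l-1\le l$. Either way $dist_G(v,S)\le l$, so $v\in N^{l}_G[S]$.

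The main obstacle is precisely this last inclusion. A priori $S$ may be a proper subset of the middle layer $L_{l+1}$; indeed $L_{l+1}$ itself need not be a minimal separator of $G$, since some of its vertices may fail to touch the $\overline{D}$-side component, so one cannot simply take $S=L_{l+1}$. The routing argument above, together with the two inequalities $j\le 2l+1$ and $dist_G(\overline{C},\overline{D})\ge 2l+2$, is what simultaneously forces $S$ to be minimal (via Proposition~\ref{prop:min_sep}) and large enough that its $l$-neighbourhood recovers all of $\overline{S}$; the delicate bookkeeping is keeping track of which component of $D^{*}$ each vertex belongs to.
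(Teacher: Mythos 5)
Your proof is correct and follows essentially the same route as the paper's: you take $S$ to be the neighborhood of the component of $b$ in $G - N_G[N^{l}_G[\overline{C}]]$, invoke Proposition~\ref{prop:min_sep}, and establish both inclusions of $\overline{S}=N^{l}_G[S]$ by routing $G$-paths of length at most $k$ supplied by $\overline{S}=N_{G^k}(\overline{C})=N_{G^k}(\overline{D})$, which is exactly the structure of Claims 1--4 in the paper (your BFS-layer bookkeeping just makes the distance estimates more explicit). The one step you leave implicit --- that $\overline{D}$ lies in a single component of $D^{*}$, so the endpoint of your shortest path to $\overline{D}$ really is in $D_b$ --- follows from the same distance computation and is the counterpart of the paper's Claims 1 and 3.
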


\begin{proof}

The lemma is trivially true if $\overline{S} = \emptyset$. Let $a,b \in V$ such that $\overline{S} \neq \emptyset$ is an $a,b$-minimal separator in $G^k$, and call $C_a$, $C_b$ the components of $G^k - \overline{S}$ that contain $a$ and $b$, respectively. 
Let us call $D_a = N^{l}_G[C_a]$ and $D_b = N^{l}_G[C_b]$.

\smallskip
{\bf Claim 1: $dist_G(D_a, D_b) \geq 2$.} 

Suppose that $dist_G(D_a, D_b) < 2$, and pick $x\in D_a, y \in D_b$ with $dist_G(x,y)\leq 1$ (notice that possibly $x=y$). Let $x_a \in C_a$ and $x_b \in C_b$ be such that there exists an $x_a, x$-path and a $y, x_b$ -path in $G$, each one of length at most $l$, called $P_a$ and $P_b$, respectively. This implies that there must be a $x_a, x_b$-path of length at most $2l + 1 = k$ in $G$, which means that $\{x_a, x_b\} \in E(G^k)$, a contradiction with the fact that $\overline{S}$ separates $C_a$ from $C_b$ in $G^k$.  

\smallskip

{\bf Claim 2}: $\tilde{S} = \overline{S} ~\backslash \left( D_a \cup D_b \right)$ separates $a$ and $b$ in $G$. 

Notice first that $N_G(D_a) \subseteq N_G^{l+1}(C_a) \subseteq \overline{S}$. Suppose that $\tilde{S}$ does not separate $a$ and $b$, and let $P$ be an $a,b$-path in $G$ that does not pass through $\tilde{S}$. Let $x_1, \dots, x_{s-2}$ the internal nodes of $P$, where $s = |P|$, and consider $i = \max\{j  ~|~ x_j \in D_a\cap P\}$. Since $P \cap \tilde{S} = \emptyset$, necessarily $x_{i+1} \in  D_b$,  a contradiction with Claim 1.


\smallskip

{\bf Claim 3}: $D_a$ and $D_b$ are connected subsets of $G$. 

This is straightforward from the definition of the sets, $D_a = N^{l}_G[C_a]$ and $D_b = N^{l}_G[C_b]$, and the fact that $C_a$ and $C_b$ are connected in $G$.

\medskip

Let $\tilde{C}_b$ be the connected component of $G - N_G[D_a]$ that contains $b$, and denote $S = N_G(\tilde{C_b})$. Note that $S \subseteq \tilde{S} \subset \overline{S}$. By applying Proposition \ref{prop:min_sep}, we have  that $S$ is a minimal $a,b$-separator in $G$. Call $\tilde{C}_a$ the component of $G - S$ that contains $a$. Since $S \subseteq \tilde{S}$, we have that $D_b \subseteq \tilde{C}_b$ and $D_a \subseteq \tilde{C}_a$.

\smallskip

{\bf Claim 4}: $N^l_G[S] = \overline{S}$. 

We first prove that $N^l_G[S] \subseteq \overline{S}$. By construction, $S \subseteq N_G(D_a)$. Consequently $S\subseteq N_G^{l+1}(C_a) \backslash N_G^{l}(C_a)$, therefore $N_G^l[S] \subseteq N_G^{2l+1}(C_a) = N_{G^k}(C_a) = \overline{S}$. 

Conversely, we must show that every vertex $x$ of $\overline{S}$ is in $N^l_G[S]$. 
%
By contradiction, let $x \in \overline{S} \setminus N^l_G[S]$. We distinguish two cases~: $x \in \tilde{C}_a$, and $x \in \overline{S} \setminus \tilde{C}_a$.
In the first case,  since $N_{G^k}(C_b) = \overline{S}$, there exists a path from some vertex $y\in C_b$ to $x$ of length at most $k$, in graph $G$. Let us call $P$ one of those $y,x$-paths. Observe that the first $l+1$ vertices of the path belong to $D_b \subseteq \tilde{C_b}$, and none of the last $l+1$ vertices of the path belongs to $S$ (otherwise $x \in N_G^l[S]$). Then $P$ is a path that connects $\tilde{C_b}$ with $\tilde{C_a}$ without passing through $S$, a contradiction with the fact that $S$ separates $a$ and $b$ in graph $G$. 

It remains to prove the last case, when $x\in \overline{S} \setminus \tilde{C}_a$. Since $N_{G^k}(C_a) = \overline{S}$, there exists a node $y \in C_a$ such that there is a $y,x$-path $P$ of length at most $k$ in $G$. Since the first $l+1$ vertices of the path belong to $D_a$, and the last $l+1$ vertices of the path do not belong to $S$, we deduce that $P$ is an $y,x$-path in $G$ that does not intersect $S$. The path can be extended (through $C_a$) into an $a,x$-path that does not intersect $S$, a contradiction with the fact that $x$ does not belong to $\tilde{C}_a$.  This concludes the proof of our theorem. 
\qed
\end{proof}

Recall that \textsc{Distance-$d$ Independent set} on $G$ is equivalent to \textsc{Maximum Independent Set} on $G^{d-1}$. Since the latter problem is polynomial on $\Gpoly$ by Proposition~\ref{pr:FoVi10}, we deduce:
\begin{theorem}\label{th:odddis}
For any even value $d$, and any polynomial $\poly$, problem \textsc{Distance-$d$ Independent set} is polynomially solvable on $\Gpoly$.
\end{theorem}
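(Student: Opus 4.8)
The plan is to exploit the stated equivalence between \textsc{Distance-$d$ Independent Set} on $G$ and \textsc{Maximum Independent Set} on the power $G^{d-1}$, and then to bring the latter under the umbrella of Proposition~\ref{pr:FoVi10} by showing that $G^{d-1}$ itself lies in $\Gpoly$. Since \textsc{Maximum Independent Set} is precisely the problem of finding a maximum induced subgraph of treewidth at most $0$, once we know $G^{d-1} \in \Gpoly$ the conclusion follows directly from Proposition~\ref{pr:FoVi10} applied with $t=0$. So the entire task reduces to controlling the number of minimal separators of $G^{d-1}$.

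Here the decisive observation is that $d$ is \emph{even}, hence $k := d-1$ is \emph{odd}, which is exactly the regime covered by Theorem~\ref{th:main}. That theorem asserts that every minimal separator $\overline{S}$ of $G^{k}$ has the form $N^{l}_G[S]$ for some minimal separator $S$ of $G$, where $k = 2l+1$. In other words, the map $S \mapsto N^{l}_G[S]$, defined on the minimal separators of $G$, is surjective onto the minimal separators of $G^{k}$. A surjection cannot increase cardinality, so $G^{d-1}$ has at most as many minimal separators as $G$. Since $G$ and $G^{d-1}$ share the same vertex set, they have the same number of vertices $n$; thus if $G$ has at most $\poly(n)$ minimal separators, so does $G^{d-1}$, and $G^{d-1} \in \Gpoly$ for the very same polynomial $\poly$.

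The remaining ingredients are routine. The power $G^{d-1}$ is built in polynomial time by computing all pairwise distances (for instance by a breadth-first search from each vertex) and inserting an edge between every pair of vertices at distance at most $d-1$; we then run the algorithm of Proposition~\ref{pr:FoVi10} on $G^{d-1}$ with $t=0$ and read off a maximum independent set, which by the equivalence is a maximum distance-$d$ independent set of $G$. I do not expect any genuine obstacle at this level: the whole combinatorial difficulty has already been absorbed into Theorem~\ref{th:main}, and the only points requiring a moment's care are the cardinality bound extracted from surjectivity and the observation that the vertex set---and hence $n$---is preserved when passing to the power, so that membership in $\Gpoly$ holds with the same polynomial.
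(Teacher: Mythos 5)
Your proposal is correct and follows exactly the paper's route: reduce to \textsc{Maximum Independent Set} on $G^{d-1}$, use Theorem~\ref{th:main} (the surjectivity of $S \mapsto N_G^l[S]$ onto the minimal separators of the odd power) to conclude $G^{d-1} \in \Gpoly$, and apply Proposition~\ref{pr:FoVi10} with $t=0$. The only difference is that you spell out the cardinality and vertex-set-preservation details that the paper leaves implicit.
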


We remind that for any odd value $d$,  problem \textsc{Distance-$d$ Independent set} is NP-hard on chordal graphs~\cite{EGM14}, thus on $\Gpoly$ for any polynomial $\poly$ asymptotically larger than $n$. The construction of~\cite{EGM14} also shows that even powers of chordal graphs may contain exponentially many minimal separators. 

\section{On \textsc{Connected Vertex Cover} and \textsc{Connected Feedback Vertex Set}}\label{se:cvc}

Let us consider the problem of finding a maximum induced subgraph $G[F]$ such that $\tw(G[F])\leq t$ and $G - F$ is connected. One can easily observe that, for $t=0$ (resp. $t=1$), this problem is equivalent to \textsc{Connected Vertex Cover} (resp. \textsc{Connected Feedback Vertex Set}), in the sense that if $F$ is an optimal solution for the former, than $V(G) - F$ is an optimal solution for the latter. 

Our goal is to enrich the dynamic programming scheme described in Subsection~\ref{ss:FoVi10} in order to ensure the connectivity of $G - F$. One should think of this dynamic programming scheme of Subsection~\ref{ss:FoVi10} as similar to dynamic programming algorithms for bounded treewidth. The difference is that the bags (here, the \pmc s) are not small but polynomially many, and we parse simultaneously through a set of decompositions. Nevertheless, we can borrow several classical ideas from treewidth-based algorithms.

In general, for checking some property for the solution $F$, we add a notion of \emph{characteristics} of partial solutions. Then, for a characteristic $c$, we update the Definition~\ref{de:partcomp} in order to define partial solutions compatible with $(S,C,W,c)$ (resp. $(S,C,\Omega,W,c)$), by requesting the partial solution to be compatible with characteristic $c$. Parameter $c$ will also appear in the updated version of Equations~\ref{eq:base} to~\ref{eq:bfg}. 

As usual in dynamic programming, the characteristics must satisfy several properties: (1) we must be able to compute the characteristic for the base case, (2) the characteristic of a partial solution $F$ obtained from gluing smaller partial solutions $F_i$ must only depend on the characteristics of $F_i$, and (3) the characteristic of a global solution should indicate whether it is acceptable or not. Moreover, for a polynomial algorithm, we need the set of possible characteristics to be polynomially bounded.

For checking connectivity conditions on $G-F$, we define the characteristics of partial solutions in a natural way. Consider a block $(S,C)$ (resp. a good triple $(S,C,\Omega)$) and a subset $W$ of 
$S$ (resp. of $\Omega$). 
Let $F$ be a partial solution compatible with $(S,C,W)$ (resp. $(S,C,\Omega,W)$), see Definition~\ref{de:partcomp}. The \emph{characteristic} $c$ of $F$ for $(S,C,W)$ (resp. for $(S,C,\Omega,W)$) is defined as the partition induced on $S \setminus W$ (resp. on $\Omega \setminus W$) by the connected components of  $G[S \cup C] - F$. More formally, let $D_1, \dots, D_q$ denote the connected components of $G[S \cup C] - F$, and let $P_j = D_j \cap S$ (resp. $P_j = D_j \cap \Omega$), for all $1 \leq j \leq q$. Then $c = \{P_1,\dots, P_q\}$. We decide that if $S \neq \emptyset$, partial solutions $F$ having some component $D_j$ that does not intersect $S$ (resp. $\Omega$) are immediately rejected; indeed, for any extension $F'$ of $F$, the graph $G - F'$ remains disconnected. Hence we may assume that all sets $P_j$ are non-empty.

We say that a partial solution $F$ is \emph{compatible with $(S,C,W,c)$} (resp. \emph{with $(S,C,\Omega,W,c)$}) if it satisfies the conditions of Defintion~\ref{de:partcomp}, and $c$ is the characteristic of $F$ for $(S,C,W)$ (resp. for $(S,C,\Omega,W)$).

We also define functions $\alpha(S,C,W,c)$, $\beta(S,C,\Omega,W,c)$ and $\gamma_i(S,C,\Omega,W,c)$ like in Subsection~\ref{ss:FoVi10}, as the maximum size of partial solutions $F$ compatible with the parameters. 
We can update Equations~\ref{eq:base} to~\ref{eq:bfg} as follows.

\paragraph{Base case.} For the good triples $(S,C,\Omega)$ such that $(S,C)$ is inclusion-minimal (hence $\Omega = S \cup C$),
\begin{equation}\label{eq:base2}
\beta(S,C,\Omega,W,c) = |W|~\text{if $c$ corresponds to the connected components of $G[\Omega \setminus W]$.}
\end{equation}
Otherwise we set $\beta(S,C,\Omega,W,c) = -\infty$.

\paragraph{Computing $\alpha$ from $\beta$.} 
\begin{equation}\label{eq:afb2}
\alpha(S,C,W,c) = \max_{\Omega, W',c'} \beta(S,C,\Omega,W',c'),
\end{equation}
where the maximum is taken over all \pmc s $\Omega$ such that $(S,C,\Omega)$ is a good triple, and all subsets $W'$ of $\Omega$, of size at most $t+1$, such that $W = W' \cap S$, and all characteristics $c'$ such that each part of $c$ corresponds to the intersection between $S$ and a part of $c'$.
If $S \neq \emptyset$ we also request that each part of $c'$ intersects $S$. This condition allows to reject partial solutions $F$ for which a component of $G[S \cup C] - F$ is strictly contained in $C$. Indeed, such partial solutions cannot extend to global solutions $F'$ such that $G - F'$ is connected.

For the particular case $S = \emptyset$ (hence $C=V$ and $W = \emptyset$) we only consider characteristics $c'$ with a single part. This ensures that the global solution $F$ satisfies that $G - F$ is connected.

If there is no such triple $(\Omega, W',c')$, then we set $\alpha(S,C,W,c) = - \infty$ (we can assume that, when it has no parameters, function $\max$ returns $-\infty$).

\paragraph{Computing $\beta$ from $\alpha$.} 

\begin{equation}\label{eq:g12}
\gamma_1(S,C,\Omega,W,c) = \max_{c'} (\alpha(S_1,C_1,\Omega,W \cap S_1, c') + |W| - |W \cap S_1|),
\end{equation}
over all characteristics if $c'$ that \emph{map correctly} on $c$, in the following sense. Consider a characteristic $c'$ and let $G_{c'}[\Omega \setminus W]$ be the graph obtained from $G[\Omega \setminus W]$ by completing each part $D \in c'$ into a clique. We say that a characteristic $c'$ maps correctly on $c$ if $c$ is the partition of $\Omega \setminus W$ defined by the connected components of $G_{c'}[\Omega \setminus W]$.

The notion of mapping transforms the characteristic of the partial solution $F_1$ w.r.t. $(S_1,C_1,W \cap S_1)$ into the characteristic of $F_1 \cup W$ w.r.t. the quadruple $(S,C,\Omega,W)$.

For all $i$, $2 \leq i \leq p$,
\begin{equation}\label{eq:gi2}
\gamma_i(S,C,\Omega,W,c) = \max_{c_{i-1},c_i}(\gamma_{i-1}(S,C,\Omega,W,c_{i-1}) + \alpha(S,C,\Omega,W \cap S_i,c_i) - |W \cap S_i|),
\end{equation}
over all pairs of characteristics $c_{i-1},c_i$ that \emph{map correctly} on $c$. That is, $c$ must correspond to the connected components of $G_{c_{i-1},c_i}[\Omega \setminus W]$, obtained from $G[\Omega \setminus W]$ by completing each part of $c_{i-1}$ and each part $c_i$ of into a clique.

Finally
\begin{equation}\label{eq:bfg2}
\beta(S,C,\Omega,W,c) = \gamma_p(S,C,\Omega,W,c).
\end{equation}

The optimal solution size is, as before, $\alpha(\emptyset,V,\emptyset,\{\emptyset\})$.

In general, the number of characteristics may be exponential. Nevertheless, there are classes of graphs with the property that each minimal separator $S$ and each \pmc\ $\Omega$ can be partitioned into at most a constant number of cliques. With this constraint, the number of characteristics is polynomial (even constant, for any given triple $(S,C,W)$ or quadruple $(S,C,\Omega,W)$).

This is the case for chordal graphs, where each minimal separator and each \pmc\ induces a clique in $G$.

It is also the case for circular-arc graphs. Recall that each minimal separator corresponds to the set of arcs intersecting a pair of scanpoints~\cite{KKW98}. Moreover, by~\cite{KKW98,BoTo01}, each \pmc\ corresponds to the set of arcs intersecting a triple of scanpoints. Since arcs intersecting a given scanpoint form a clique, we have that each minimal separator can be partitioned into  two cliques, and each \pmc\ can be partitioned into three cliques. 

We deduce:
\begin{theorem}\label{th:cvc}
On chordal and circular-arc graphs, problems \textsc{Connected Vertex Cover} and~\textsc{Connected Feedback Vertex Set} are solvable in polynomial time. More generally, one can compute in polynomial time a maximum vertex subset $F$ such that $G[F]$ is of treewidth at most $t$ and $G - F$ is connected.
\end{theorem}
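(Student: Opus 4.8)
The plan is to show that the enriched dynamic programming governed by Equations~\ref{eq:base2}--\ref{eq:bfg2} correctly computes the maximum size of a set $F$ with $\tw(G[F]) \le t$ and $G - F$ connected, and then that it runs in polynomial time on the two classes. I would treat correctness and running time separately. The size bookkeeping ($|W|$ in the base case, the additive corrections in Equations~\ref{eq:g12}--\ref{eq:gi2}) is inherited verbatim from the scheme of Subsection~\ref{ss:FoVi10}, and ultimately from Proposition~\ref{pr:FoVi10}; the only new obligation is to verify that the \emph{characteristic} behaves as a legitimate dynamic-programming invariant.

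For correctness I would argue by induction on the block size $|S \cup C|$ that $\alpha$, $\beta$ and $\gamma_i$ equal the maximum size of a partial solution compatible with their parameters (or $-\infty$ if none exists), checking the three standard properties of a characteristic. First, the base case (Equation~\ref{eq:base2}) is sound because when $\Omega = S\cup C$ is inclusion-minimal the partial solution is exactly $F = W$, so $G[S\cup C]-F = G[\Omega\setminus W]$ and its connected components are precisely the recorded partition. Second, when a solution $F$ compatible with $(S,C,\Omega,W)$ is assembled from the $F_i$ on the blocks $(S_i,C_i)$ together with $W$, its characteristic must depend \emph{only} on the characteristics of the $F_i$; this is exactly what the \emph{map correctly} operation encodes, by completing each recorded part into a clique (registering that its vertices are already joined through the corresponding component) and then reading off the connected components within $\Omega \setminus W$. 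Third, the acceptability test at the root block $(\emptyset,V)$, where Equation~\ref{eq:afb2} restricts to single-part characteristics, holds if and only if $G-F$ is connected.

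The step I expect to be the main obstacle is the composition property, namely justifying that the partition recorded on the interface $S \setminus W$ (resp. $\Omega\setminus W$) captures \emph{all} the connectivity information that any future extension can use. Concretely, one must prove that two vertices of $S\setminus W$ lie in a common component of $G-F'$ in every extension $F' \supseteq F$ exactly when they are already joined inside $G[S\cup C]-F$, so that merges occurring deep inside $C$ are invisible from the outside except through $S$. The separator property $N(C)=S$ is precisely what makes $S$ the sole interface and thus guarantees that recording the partition of $S\setminus W$ is sufficient. Together with the rejection rule discarding partial solutions whose components of $G[S\cup C]-F$ avoid $S$ (such a component is permanently sealed from $V\setminus(S\cup C)$ and dooms connectivity of every global complement), this establishes soundness; I would isolate this argument as the central lemma.

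Finally, for the running time I would invoke the polynomial-time enumeration of minimal separators, blocks, good triples and \pmc s~\cite{BBC00,BoTo02}, reducing the matter to bounding the number of characteristics per triple. The key structural fact is that on both classes the interface splits into few cliques, and since vertices of a common clique always share a component, every characteristic is a coarsening of the partition into those $r$ cliques, whence there are at most $B_r$ of them (the $r$-th Bell number). For chordal graphs each minimal separator and each \pmc\ induces a clique, so $r=1$ and there is essentially a single characteristic. For circular-arc graphs the scanline model yields each minimal separator as the union of two cliques (arcs through two scanpoints) and each \pmc\ as the union of three cliques (arcs through three scanpoints), so $r \le 3$ and the number of characteristics is at most $B_3 = 5$, a constant. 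Multiplying this constant by the polynomially many triples and the polynomially many subsets $W$ of size at most $t+1$ gives the claimed polynomial running time, which completes the proof.
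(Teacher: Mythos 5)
Your proposal is correct and follows essentially the same route as the paper: enrich the Fomin--Villanger scheme with the partition-of-the-interface characteristic, verify the standard base/composition/acceptability properties (with the ``map correctly'' clique-completion handling composition and the rejection rule handling sealed-off components), and bound the number of characteristics per triple by a Bell number using the fact that minimal separators and potential maximal cliques are covered by one clique in chordal graphs and by two, respectively three, cliques in circular-arc graphs. The paper presents this more tersely, leaving the correctness invariant implicit, but the argument is the same.
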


Note that Escoffier \textit{et al.}~\cite{EGM10} already observed that \textsc{Connected Vertex Cover} is polynomial for chordal graphs.

\section{\textsc{Independent Dominating Set} and variants}\label{se:ind}

The \textsc{Independent Dominating Set} problem consists in finding a \emph{minimum} independent set $F$ of $G$ such that $F$ dominates $G$. Hence the solution $F$ induces a graph of treewidth $0$ and it is natural to ask if similar techniques work in this case. The fact that we have a minimization problem is not a difficulty: the general dynamic programming scheme applies in this case, and for any weighted problem with polynomially bounded weights, including negative ones~\cite{FoVi10,FTV15}.

\textsc{Independent Dominating Set} is known to be NP-complete in chordal bipartite graphs~\cite{DMK90} and in circle graphs~\cite{BGMPST14}. Therefore, it is NP-hard on $\Gpoly$ for some polynomials $\poly$. But, again, we can use our scheme in the case of circular-arc graphs, for this problem or any problem of the type minimum dominating induced subgraph of treewidth at most a constant~$t$.

Let $(S,C)$ be a block an let $F \subseteq S \cup C$ be a partial solution compatible with $(S,C,W)$ for some $W \subseteq S$ of size at most $t+1$ (in the sense of Definition~\ref{de:partcomp}). The natural way for defining the characteristic of $F$ is to specify which vertices of $S$ are dominated by $F$ and which are not (we already know that $F \cap S = W$). It is thus enough to memorize which vertices of $S$ are dominated by $F \cap C$. In circular-arc graphs, this information can be encoded using a polynomial number of characteristics. Indeed, a minimal separator $S$ corresponds to arcs intersecting a scanline, between two scanpoints $p_1$ and $p_2$ of some intersection model of $G$. Moreover (see~\cite{KKW98}), the vertices of component $C$ correspond to the arcs situated on one of the sides of the scanline. Let $s^1_1,s^1_2,\dots, s^1_{l_1}$ be the arcs of the model containing scanpoint $p_1$, ordered by increasing intersection with the side of $p_1p_2$ corresponding to $C$. Simply observe that if $F \cap C$ dominates vertex  $s^1_i$, it also dominates all vertices $s^1_j$ with $j>i$. Therefore we only have to store the vertex $s^1_{min_1}$ dominated by $F \cap C$ which has a minimum intersection with the side of the scanline corresponding to component $C$, and proceed similary for the arcs of $S$ containing scanpoint $p_2$. These two vertices of $S$ will define the characteristic of $F$, and they suffice to identify all vertices of $S$ dominated by $F \cap C$.

\begin{figure}
\centering
\begin{tikzpicture}[scale=1] 

\draw(-1,0) arc (0:360:2) ;
\draw[ dotted] (-0.66,0) arc (0:10:2.33);
\draw[-|] (-0.66,0) arc (0:-30:2.33);
\draw[ dotted] (-0.33,0) arc (0:10:2.66);
\draw[-|] (-0.33,0) arc (0:-45:2.66) ;
\draw[dotted] (0,0) arc (0:10:3);
\draw[-|] (0,0) arc (0:-60:3);

\draw[-|] (-6,0) arc (180:230:3) ;
\draw[dotted] (-6,0) arc (180:170:3) ;

\draw[-|] (-5.66,0) arc (180:200:2.66) ;
\draw[dotted] (-5.66,0) arc (180:170:2.66) ;

\draw[densely dotted, thick, |-|] (-4.8,-1.5) arc (220:310:2.33);

\node (p1) at (-0.75,-0.3){$\phantom{p_1}$};

\node (p2) at (-5.25,-0.3){$\phantom{p_2}$};

\draw[-,very thick] (p1)--(p2);

\node (pp1) at (-1.2,-0.1){$p_1$};
\node (pp2) at (-4.75,-0.1){$p_2$};

\node (C) at (-3.3,-2.6){$F \cap C$};

\node (pp2) at (-1.1,-1.35){$s_1^1$};
\node (pp2) at (-1.2,-2.1){$s_2^1$};
\node (pp2) at (-1.7,-2.6){$s_3^1$};

\node (pp2) at (-5.4,-1.1){$s_1^2$};
\node (pp2) at (-4.75,-2.4){$s_2^2$};

\end{tikzpicture} 
\caption{Domination in circular-arc graphs. The characteristic of $F$ w.r.t. $(S,C)$ is $(s_3^1, s_2^2)$.}\label{fi:ca}
\end{figure}
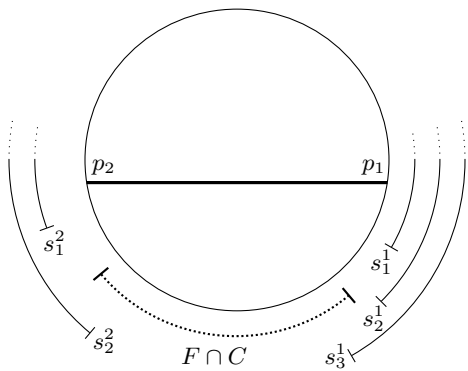

These characteristics can be used to compute a minimum dominating induced subgraph of treewidth at most $t$, for circular-arc graphs, in polynomial time. We will not show, in details, how to do it, since the technique is quite classical. Problem \textsc{Independent Dominating Set} is already known to be polynomial for this class~\cite{Ch98,Va12}. The algorithm of Vatshelle~\cite{Va12} is more general,  based on parameters called \emph{boolean-width} and \emph{MIM-width}, which are small ($\cO(\log n)$ for the former, constant for the latter) on circular-arc graphs and also other graph classes. 



Another problem of similar flavor, combining domination and independence, is \textsc{Red-Blue Dominating Set}. In this problem we are given a bipartite graph $G=(R,B,E)$ with red and blue vertices,  and an integer $k$, and the goal is to find a set of at most $k$ blue vertices dominating all the red ones. \textsc{Red-Blue Dominating Set} can be reduced to \textsc{Connected Vertex Cover} as follows~\cite{DLS14}. Let $G'$ be the graph obtained from $G=(R,B,E)$ by adding a new vertex $u$ adjacent to all vertices of $B$ and then, for each $v \in R \cup \{u\}$, a pendant vertex $v'$ adjacent only to $v$. Then $G$ has a red-blue dominating set of size at most $k$ if and only if $G'$ has a connected vertex cover of size at most $k+|B|+1$. Indeed any minimum connected vertex cover of $G'$ must contain $u$, $R$, and a subset of $B$ dominating $R$. It is not hard to prove that this reduction increases the number of minimal separators by at most $\cO(n)$, see Appendix~\ref{app:red}.

Therefore, if \textsc{Red-Blue Dominating Set} is NP-hard on (bipartite) $\Gpoly$ for some $poly$, so is  
\textsc{Connected Vertex Cover}. 
There are two natural, well-studied classes of bipartite graphs with polynomial number of minimal separators, and it turns out that \textsc{Red-Blue Dominating Set} is polynomial for both. One is the class of chordal bipartite graphs (which are actually defined as the bipartite, \emph{weakly} chordal graphs). For this class, \textsc{Red-Blue Dominating Set} is polynomial by~\cite{DMK90}. Reference~\cite{DMK90}  considers the total domination problem for the class, but the approach is based on red-blue domination.


The second natural class is the class of  circle bipartite graphs, i.e., bipartite graphs that are also circle graphs. They have an elegant characterization established by de Fraysseix~\cite{Fr81}. Let $H = (V,E)$ be a planar multigraph, and partition its edge set into two parts $E_R$ and $E_B$ such that $T=(V,E_R)$ is a spanning tree of $H$. Let $B(H, E_R) = (E_R, E_B, E')$ be the bipartite graph defined as follows: $E_R$ is the set of red vertices, $E_B$ is the set of blue vertices, and $e_R \in E_R$ is adjacent to $e_B \in E_B$ if the unique cycle obtained from the spanning tree $T$ by adding $e_B$ contains the edge $e_R$.  We say that $B(H, E_R)$ is a fundamental graph of $H$. By~\cite{Fr81},  a graph is circle bipartite if and only if it is the fundamental graph $B(H, E_R)$  of a planar multigraph $H$.


Consider now the \textsc{Tree augmentation} problem that consists in finding, on input  $G$ and a spanning tree $T$ of $G$, a minimum set of edges $D \subseteq E(G)-E(T)$ such that each edge in $E(T)$ is contained in at least one cycle of $G' = (V, E(T) \cup D)$.  In \cite{Provan199987} is shown that \textsc{Tree augmentation} is polynomial when the input graph is planar.  Is direct to see that a set $S \subseteq E_B$ is a solution of the  \textsc{Tree augmentation} problem on input $H = (V, E_R \cup E_B)$ and $T = (V, E_R)$,  if and only if $S$ is a solution of  \textsc{Red-Blue Dominating Set} on input $B(H) = (E_R, E_B, E')$. This observation, together with \cite{Fr81} and \cite{Provan199987}, impliy that \textsc{Red-Blue Dominating Set} is polynomial in circle bipartite graphs. 


\section{Discussion}

We showed how the dynamic programming scheme of~\cite{FoVi10,FTV15} can be extended for other optimization problems, on \emph{subclasses} of $\Gpoly$. Note that the algorithm of~\cite{FTV15} allows to find in polynomial time, on $\Gpoly$, a maximum (weight) subgraph $G[F]$ of treewidth at most $t$, satisfying some property expressible in CMSO. It also handles annotated versions, where the vertices/edges of $G[F]$ must be selected from a prescribed set. 

We have seen that \textsc{Distance-$d$ Independent Set} can be  solved in polynomial time on $\Gpoly$ for any even $d$. This also holds for the more general problem of finding an induced subgraph $G[F]$ whose components are at pairwise distance at least $d$, and such that each component is isomorphic to a graph in a fixed family. E.g., each component could be an edge, to have a variant of \textsc{Maximum Induced Matching} where edges should be at pairwise distance at least $d$. For this we need to solve the corresponding problem on $G^{d-1}$, using only edges from $G$, as in~\cite{FTV15}.

When seeking for maximum (resp. minimum) induced subgraphs $G[F]$ of treewidth at most $t$ such that $G - F$ is connected (resp. $F$ dominates $G$) on particular subclasses of $\Gpoly$, we can add any CMSO condition on $G[F]$. It is not unlikely that the techniques can be extended to other classes than circular-arc graphs (and chordal graphs, for connectivity constraints).

We also believe that the interplay between graphs of bounded MIM-width~\cite{Va12} and $\Gpoly$ deserves to be studied. None of the classes contains the other, but several natural  graph classes are in their intersection, and they are both somehow related to induced matchings.

We leave as open problems the complexity of \textsc{Connected Vertex Cover} and \textsc{Connected Feedback Vertex set} in weakly chordal graphs, and on $\Gpoly$. We have examples showing that, even for weakly chordal graphs, the natural set of characteristics that we used in Section~\ref{se:cvc} is not polynomially bounded.

\paragraph{Acknowledgements.} We  thank Iyad Kanj for fruitful discussions on the subject.
\bibliographystyle{plain}
\bibliography{OptMinSep}

\appendix

\section{Reduction from \textsc{Red-Blue Dominating Set} to \textsc{Connected Vertex Cover} on bipartite $\Gpoly$}\label{app:red}

\begin{lemma}
Let $G'$ be the graph obtained from the bipartite graph $G=(R,B,E)$ by adding a new vertex $u$ adjacent to all vertices of $B$ and then, for each $v \in R \cup \{u\}$, a pendant vertex $v'$ adjacent only to $v$.

$G'$ has at most $|V(G')|$ more minimal separators than $G$. 
\end{lemma}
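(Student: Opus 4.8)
The plan is to understand precisely how the minimal separators of $G'$ relate to those of $G$, by analyzing the structure imposed by the newly added vertices. Let me set up notation: $G'$ adds the universal-to-$B$ vertex $u$, plus a collection of pendant vertices---one pendant $v'$ for each $v\in R\cup\{u\}$. The key observation is that pendant vertices are degree-one, so they interact with minimal separators in a very controlled way. I would first recall the standard fact that if $v'$ is a pendant attached to $v$, then $v'$ lies in its own singleton component of $G'-S$ whenever $v\in S$, and otherwise $v'$ sits in the same component as $v$. Thus pendants can only enter a minimal separator $S$ in a degenerate way, and more importantly, the presence of the pendant $u'$ forces $u$ to behave like a ``protected'' vertex.

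The heart of the argument is a classification of the minimal separators $S$ of $G'$ according to their interaction with $u$ and the pendants. First I would dispose of minimal separators that contain a pendant vertex $v'$: since $\deg_{G'}(v')=1$, a minimal separator containing $v'$ must be $v'$ itself being separated, which forces $S\subseteq\{v\}$ or a very restricted shape; I expect at most $O(|V(G')|)$ such separators, and I would bound them crudely by the number of pendants. Next, for minimal separators $S$ not containing any pendant, I would use the presence of $u'$: because $u'$ is pendant on $u$, any minimal separator either contains $u$ (in which case $u'$ becomes an isolated component, and the ``full'' sides of the separator correspond, after deleting $u$ and the pendants, to a minimal separator of $G$), or does not contain $u$ (in which case $u$ together with all of $B$ stays connected on one side, collapsing the blue side into a single component and again inducing a separator of $G$). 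The main technical content is to show that the restriction map $S\mapsto S\cap V(G)$ (or $S\mapsto S\setminus\{u,\text{pendants}\}$) is injective on each of these classes up to a correction term bounded by the number of special vertices.

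Concretely, I would argue that for a minimal $a,b$-separator $S$ of $G'$ with $a,b\in V(G)$ and $S$ containing neither $u$ nor any pendant, the set $S$ is also a minimal separator of $G$: the universal vertex $u$ merges all of $B$ into one side, so the component structure of $G'-S$ restricted to $V(G)$ matches that of $G-S$, and $N_{G'}(C)=N_G(C)=S$ since the only extra neighbors $u$ could contribute are excluded. Conversely each minimal separator of $G$ either survives in $G'$ or must be augmented by $u$; these augmentations and the separators touching $u$ or the pendants are what I would collect into the additive error term. Counting: the separators involving a pendant are at most one per pendant, i.e. at most $|R|+1$, and the separators containing $u$ but no pendant inject into separators of $G$ after removing $u$, so they do not increase the count beyond the ``plus one vertex'' slack; summing these contributions yields the bound of at most $|V(G')|$ additional separators.

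The step I expect to be the main obstacle is the careful bookkeeping for separators that contain $u$: I must verify that after deleting $u$ (and the pendant $u'$), the resulting set is genuinely a minimal separator of $G$ and that the map from such $G'$-separators to $G$-separators is injective, so that they do not overcount. The subtlety is that a single minimal separator $S$ of $G$ might correspond to both $S$ and $S\cup\{u\}$ as separators of $G'$, which is exactly the source of the extra additive term; pinning down that each minimal separator of $G$ spawns at most a bounded number of new separators of $G'$ (and that the total overcount is absorbed by $|V(G')|$) is where the real care is needed. Everything else---the pendant analysis and the universal-vertex merging---is routine once the component structure of $G'-S$ is written out explicitly.
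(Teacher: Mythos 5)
Your overall strategy (classify the minimal separators of $G'$ by how they interact with $u$ and the pendants, and map most of them injectively to minimal separators of $G$) is reasonable, but the classification you chose does not work, and the gap lies in your class of separators containing neither $u$ nor any pendant. You claim these are already minimal separators of $G$; that is false. Take $G$ to be the single edge $rb$ with $r\in R$, $b\in B$: then $G$ has no minimal separator at all, while in $G'$ the set $\{b\}$ is a minimal separator (with full components $\{r,r'\}$ and $\{u,u'\}$) that contains neither $u$ nor a pendant. More generally, the new separators of $G'$ that do not come from $G$ are exactly those having a full component disjoint from $B$, and such a separator need not contain $u$ or a pendant --- it is its full \emph{component}, not the separator itself, that is built from the new vertices. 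Your budget of ``at most one separator per pendant'' also undercounts: a pendant $v'$ can be responsible both for the separator $\{v\}$ (full component $\{v'\}$) and for the separator $N_{G'}(v)\setminus\{v'\}$ (full component $\{v,v'\}$). As a side remark, no minimal separator ever contains a pendant vertex, since every vertex of a minimal separator must have neighbours in two distinct full components while a pendant has degree one; your first class is empty rather than of size $|R|+1$.

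The correct dichotomy, which is what the paper uses, is on the full components rather than on whether $u\in S$: either both full components $C,D$ with $N_{G'}(C)=N_{G'}(D)=S$ contain a blue vertex, in which case $u$ necessarily lies in $S$ (it is adjacent to all of $B$), no pendant lies in $S$, and removing $u$ and the pendant vertices shows that $S\setminus\{u\}$ is a minimal separator of $G$ --- this map is injective on the whole class precisely because every member contains $u$; or some full component $C$ avoids $B$, in which case $C$ is forced to be either a single pendant $\{v'\}$ or a pair $\{v,v'\}$ with $v\in R\cup\{u\}$ (the vertices of $R\cup\{u\}$ are pairwise non-adjacent, so $C$ cannot contain two of them), and since $S=N_{G'}(C)$ is determined by $C$ there are at most $2(|R|+1)\le |V(G')|$ separators of this second kind. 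Without this structural fact, your two injections (identity on class (c), deletion of $u$ on class (b)) can both hit the same separator of $G$ --- the double-counting issue you flag at the end but do not resolve --- and class (c) contains separators with no valid target in $G$ at all, so the bound does not close as written.
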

\begin{proof}
Let $S$ be a minimal separator of $G'$ and let $C,D$ be two components of $G' - S$ such that $N_{G'}(C) = N_{G'}(D) = S$. 

Assume first that there are two blue vertices $c \in C$ and $d \in D$. Then vertex $u$ (seeing the whole set $B$) must be in $S$. No pendant vertex can be in $S$ because vertices of $S$ have neighbors in both $C$ and $D$. Therefore $S \setminus \{u\}$ also separates $c$ and $d$ in $G$, it is even a $c,d$-minimal separator. Indeed, by removing the pendant vertices from $C$ and $D$, we obtain two components of $G - S$ whose neighborhood in graph $G$ is $S  \setminus \{u\}$. Therefore the number of minimal separators $S$ of $G'$ of this type is at most the number of minimal separators of $G$.

It remains to consider the case when one of the components, say $C$, does not contain any blue vertex. Then $C$ contains at most two vertices. Actually, $C$ either contains a unique pendant vertex, or a vertex $v \in R \cup \{u\}$ and its pendant vertex ($C$ cannot contain a unique vertex $v \in R \cup \{u\}$, because its pendant neighbor should be in $S$, or $S$ cannot contain pendent vertices). 
Hence the number of minimal separators of this type is at most $|V(G')|$.
\qed
\end{proof}

\end{document}